\definecolor{mydarkblue}{rgb}{0.03,0.2,0.4}
\definecolor{mypink}{rgb}{0.8392156862745098, 0.32941176470588235, 0.474509803921568}
\definecolor{myblue}{rgb}{0.29411764705882354, 0.4549019607843137, 0.6941176470588235}
\definecolor{mygreen}{rgb}{0.5372549019607843, 0.7607843137254902, 0.6470588235294118}
\definecolor{myorange}{rgb}{0.9058823529411765, 0.6549019607843137, 0.36470588235294116}
\newcommand{\vv}[1]{\todo[color={orange!20}, inline]{Victor: #1}}
\newcommand{\klb}[1]{\todo[color={green!20}, inline]{KLB: #1}}
\crefname{equation}{eq.}{eqs.}	
\Crefname{equation}{Eq.}{Eqs.}	
\Crefname{section}{\S}{\S}
\DeclareMathOperator*{\argmax}{arg\,max}
\newtheorem{theorem}{Theorem}
\newcommand{\allinst}{\mathcal{Z}}
\newcommand{\valid}{\mathcal{V}}
\newcommand{\invalid}{\mathcal{I}}
\newcommand{\lb}{\hat{l}}
\newcommand{\ub}{\hat{u}}
\newcommand{\modalint}{\hat{\mathcal{I}}_{\mathrm{mode}}}
\newcommand{\wald}[1]{\hat{\beta}_{#1}}
\title{Valid Causal Inference \\with (Some) Invalid Instruments}
\author{
Jason Hartford\\
Department of Computer Science\\
University of British Columbia\\
\texttt{jasonhar@cs.ubc.ca} \\
\And
Victor Veitch\\
Google\\ 
\texttt{victorveitch@gmail.com} \\
\And
Dhanya Sridhar\\
Data Science Institute\\
Columbia University\\
\texttt{dhanya.sridhar@columbia.edu} \\
\And
Kevin Leyton-Brown\\
Department of Computer Science\\
University of British Columbia\\
\texttt{kevinlb@cs.ubc.ca} \\
}
\begin{document}

\maketitle

\begin{abstract}
Instrumental variable methods provide a powerful approach to estimating causal effects in the presence of unobserved confounding. But a key challenge when applying them is the reliance on untestable ``exclusion'' assumptions that rule out any relationship between the instrument variable and the response that is not mediated by the treatment. In this paper, we show how to perform consistent IV estimation despite violations of the exclusion assumption. In particular, we show that when one has multiple candidate instruments, only a majority of these candidates---or, more generally, the modal candidate--response relationship---needs to be valid to estimate the causal effect. Our approach uses an estimate of the modal prediction from an ensemble of instrumental variable estimators. The technique is simple to apply and is ``black-box'' in the sense that it may be used with any instrumental variable estimator as long as the treatment effect is identified for each valid instrument independently. As such, it is compatible with recent machine-learning based estimators that allow for the estimation of conditional average treatment effects (CATE) on complex, high dimensional data. Experimentally, we achieve accurate estimates of conditional average treatment effects using an ensemble of deep network-based estimators, including on a challenging simulated Mendelian Randomization problem.
\end{abstract}

\section{Introduction}


Instrumental variable (IV) methods are a powerful approach for estimating treatment effects: they are robust to unobserved confounders and they are compatible with a variety of flexible nonlinear function approximators \citep[see e.g.][]{newey2003instrumental, darolles2011nonparametric, hartford2017deep, lewis2018adversarial, Singh2019, bennett2019deep}, thereby allowing nonlinear estimation of heterogeneous treatment effects.

In order to use an IV approach, one must make three assumptions. The first, \emph{relevance}, asserts that the treatment is not independent of the instrument. This assumption is relatively unproblematic, because it can be verified with data. The second assumption, \emph{unconfounded instrument}, asserts that the instrument and outcome do not share any common causes. This assumption cannot be verified directly, but in some cases it can be justified via knowledge of the system; e.g. the instrument may be explicitly randomized or may be the result of some well understood random process. The final assumption, \emph{exclusion}, asserts that the instrument's effect on the outcome is entirely mediated through the treatment. This assumption is even more problematic; not only can it not be verified directly, but it can be very difficult to rule out the possibility of direct effects between the instrument and
the outcome variable. 
Indeed, there are prominent cases where purported instruments have been called into question for this reason. For example, in economics, the widely used ``judge fixed effects" research design \citep{kling2006incarceration} uses random assignment of trial judges as instruments and leverages differences between different judges' propensities to incarcerate to infer the effect of incarceration on some economic outcome of interest \citep[see][for many recent examples]{frandsen2019judging}. \citet{mueller2015criminal} points out that exclusion is violated if judges also hand out other forms of punishment (e.g. fines, a stern verbal warning etc.) that are not observed. Similarly, in genetic epidemiology ``Mendelian randomization" \citep{davey2003mendelian}  uses genetic variation to study the effects of some exposure on an outcome of interest. For example, given genetic markers that are known to be associated with a higher body mass index (BMI), we can estimate the effect of BMI on cardiovascular disease. However, this only holds if we are confident that the same genetic markers do not influence the risk of cardiovascular disease in any other ways. The possibility of such ``direct effects''---referred to as ``horizontal pleiotropy" in the genetic epidemiology literature---is regarded as a key challenge for Mendelian randomization \citep{hemani2018evaluating}.

It is sometimes possible to identify \emph{many} candidate instruments, each of which satisfies the relevance assumption; in such settings, demonstrating exclusion is usually the key challenge, though in principle unconfounded instrument could also be a challenge. For example, many such candidate instruments can be obtained in both the judge fixed effects and Mendelian randomization settings, where individual judges and genetic markers, respectively, are treated as different instruments. 
Rather than asking the modeler to gamble by choosing a single candidate about which to assert these untestable assumptions, this paper advocates making a weaker assumption about the whole set of candidates. Most intuitively, we can assume \emph{majority validity}: that at least a majority of the candidate instruments satisfy all three assumptions, even if we do not know which candidates are valid and which are invalid. Or we can go further and make the still weaker assumption of \emph{modal validity}: that the modal relationship between instruments and response is valid. 
Observe that modal validity is a weaker condition 
because if a majority of candidate instruments are valid, 
the modal candidate--response relationship must be 
characterized by these valid instruments. 
Modal validity is satisfied if, as Tolsoy might have said, ``All happy instruments are alike; each unhappy instrument is unhappy in its own way.''


This paper introduces ModeIV,
a robust instrumental variable technique that we show is asymptotically valid when modal validity holds. ModeIV allows the estimation of nonlinear causal effects and lets us estimate conditional average treatment effects that vary with observed covariates. ModeIV is a black-box method in the sense that it is compatible with any valid IV estimator, which allows it to leverage any of the recent machine learning-based IV estimators. 
%
We experimentally validated ModeIV using both a modified version of the \citet{hartford2017deep} demand simulation
and a more realistic Mendelian randomization example. In both settings---even when we generated data with a very low signal-to-noise ratio---we observed ModeIV to be robust to exclusion-restriction bias and to accurately recover conditional average treatment effects (CATE).

\section{Background on Instrumental Variables}
We are interested in estimating the causal effect of some treatment variable, $t$, on some outcome of interest, $y$. 
The treatment effect is confounded by a set of observed covariates, $x$, as well as unobserved confounding factors, $\epsilon$, which affect both $y$ and $t$. With unobserved confounding, we cannot rely on conditioning to remove the effect of confounders; instead we use an instrumental variable, $z$, to identify the causal effect.

Instrumental variable estimation can be thought of as an inverse problem: we can directly identify the causal\footnote{Strictly, non-causal instruments suffice but identification and interpretation of the estimates can be more subtle \citep[see][]{swanson2018challenging}.} effect of the instrument on both the treatment and the response before asking the inverse question, ``what treatment--response mappings, $f:t\rightarrow y$, could explain the difference between these two effects?" The problem is identified if this question has a unique answer. If the true structural relationship is of the form, $y = f(t, x) + \epsilon$, one can show that, 
$
  E[y|x,z] = \int f(t,x) d F(t |x, z), 
$
 where $E[y|x,z]$ gives the instrument--response relationship, $F(t | x,z)$ captures the instrument--treatment relationship, and the goal is to solve the inverse problem to find $f(\cdot)$. In the linear case, $f(t, x) = \beta t + \gamma x$, so the integral on the right hand side of reduces to $\beta E[t| x, z] + \gamma x$ and $\beta$ can be estimated using linear regression of $y$ on the predicted values of $t$ given $x$ and $z$ from a first stage regression. This procedure is known as Two-Stage Least Squares \citep[see][]{angrist2008mostly}. 
More generally, the causal effect is identified if the integral equation has a unique solution for $f$ \citep[for details, see][]{newey2003instrumental}.

A number of recent approaches have leveraged this additive confounders assumption to extend IV analysis beyond the linear setting.
\citet{newey2003instrumental} and \citet{darolles2011nonparametric} proposed the first nonparametric procedures for estimating these structural equations, based on polynomial basis expansion. These methods relax the linearity requirement, but scale poorly in both the number of data points and the dimensionality of the data. 
To overcome these limitations, recent approaches have adapted deep neural networks for nonlinear IV analyses. 
DeepIV \citep{hartford2017deep} 
fits a first-stage conditional density estimate of $\hat{F}(t | x, z)$ and uses it to solve the above integral equation. 
 Both \citet{lewis2018adversarial} and \citet{bennett2019deep} adapt generalized method of moments \citep{hansen1982} to the nonlinear setting by leveraging adversarial losses, while \citet{Singh2019} proposed a kernel-based approach. 
 \citet{Puli2019} showed that constraints on the structural equation for the treatment also lead to identification.

\section{Related work}

\paragraph{Inference with invalid instruments in linear settings} 
Much of the work on valid inference with invalid instruments is in the Mendelian Randomization literature, where violations of the exclusion restriction are common. For a recent survey see \citet{hemani2018evaluating}. 
There are two broad approaches to valid inference in the presence of bias introduced by invalid instruments: averaging over the bias, or eliminating the bias with ideas from robust statistics.
In the first setting, valid inference is possible under the assumption that each instrument introduces a random bias, but that the mean of this process if zero (although this assumption can be relaxed, c.f. \citet{Bowden2015, kolesar2015identification}). Then, the bias tends to zero as the number of instruments grow. 
Methods in this first broad class have the attractive property that they remain valid even if none of the instruments is valid, but they rely on strong assumptions that do not easily generalize to the nonlinear setting considered in this paper.
The second class of approaches to valid inference assumes that some fraction of the instruments are valid. Then, biased instruments are outliers whose effect can be removed by leveraging the robustness of the median \citep{kang2016instrumental} and the mode \citep{hartwig2017robust}.
In this paper, we use the same robust
statistics insights, generalizing them to the nonlinear setting.

\paragraph{Ensemble models} Ensembles are widely used in machine learning as a technique for improving prediction performance by reducing variance \citep{Breiman1996} and combining the predictions of weak learners trained on non-uniformly sampled data \citep{freund1995desicion}. These ensemble methods frequently use modal predictions via majority voting among classifiers, but they are designed to reduce variance.
Both the median and mode of an ensemble of models have been explored as a way of improve robustness to outliers in the forecasting literature \citep{stock_combination_2004, kourentzes_neural_2014}, but we are not aware of any prior work that explicitly uses these aggregation techniques to eliminate bias from an ensemble.

\paragraph{Mode estimation}
If a distribution admits a density, the mode is defined as the global maximum of the density function. More generally,  the mode can be defined as the limit of a sequence of modal intervals---intervals of width $h$ that contains the largest proportion of probability mass---such that $x_{\text{mode}}=\lim_{h\rightarrow 0}\argmax_x F([x-h/2, x+h/2])$. 
These two definitions suggest two estimation methods for estimating the mode from samples: either one may try to estimate the density function and the maximize the estimated function \citep{parzen_estimation_1962}, or one might search for midpoints of modal intervals from the empirical distribution functions.  To find modal intervals, one can either fix an interval width, $h$, and choose $x$ to maximize the number of samples within the modal interval \citep{chernoff_estimation_1964}, or one can solve the dual problem by fixing the target number of samples to fall into the modal interval and minimizing $h$ \citep{dalenius_mode--neglected_1965, venter_estimation_1967}. We use this latter \citeauthor{dalenius_mode--neglected_1965}--\citeauthor{venter_estimation_1967} approach as the target number of samples can be parameterized in terms of the number of valid instruments, thereby avoiding the need to select a kernel bandwidth $h$.

%




\section{ModeIV}
\label{sec:methods}

In this paper, we assume we have access to a set of $k$ independent\footnote{This independence requirement can be weakened to conditionally independent, conditional on some variable $c$. See \cref{sec:relax} for details.} candidate instrumental variables, $\mathcal{Z}=\{z_1, \dots, z_k\}$, which are `valid' if they satisfy relevance, exclusion and unconfounded instrument, and `invalid' otherwise. 
Denote the set of valid instruments, $\valid:=\{z_i: z_i \perp y | x,t, \epsilon \}$, and the set of invalid instruments, $\invalid = \allinst \setminus \valid$. 
We further assume that each valid instrument identifies the causal effect. This amounts to assuming that the unobserved confounder's affect on $y$ is additive, $y = f(t, x, z_{i:i\in \invalid}) + \epsilon$ for some function $f$ and $E[y|x,z_{i:i\neq j}, z_j] = \int f(t,x,z_{i:i\neq j}) d F(t |x,z_{i:i\neq j}, z_j)$ has a unique solution for all $j$ in $\allinst$.


The ModeIV procedure requires the analyst to specify a lower bound $V \geq 2$ on the number of valid instruments and then proceeds in three steps.
\begin{enumerate}
\item Fit an ensemble of $k$ estimates of the conditional outcome $\{\hat{f}_1, \dots, \hat{f}_k\}$ using a non-linear IV procedure applied to each of the $k$ instruments.
  Each $\hat{f}$ is a function mapping treatment $t$ and covariates $x$ to an estimate of the effect of the treatment conditional on $x$.
\item For a given test point $(t,x)$,
  select $[\lb, \ub]$ as a smallest interval containing $V$ of the estimates $\{\hat{f}_1(t,x), \dots, \hat{f}_k(t,x)\}$.
  Define $\modalint = \{i : \lb < \hat{f}_{i}(t,x) < \ub\}$ to be the indices of the instruments corresponding to estimates falling in the interval.
\item Return $\hat{f}_{\mathrm{mode}}(t,x) = \frac{1}{\lvert{\modalint}\rvert}\sum_{i \in \modalint} \hat{f}_i(t,x)$
\end{enumerate}

The idea is that the estimates from the valid instruments will tend to cluster around the true value of the effect,
$E[y|\text{do}(t), x]$.
We assume that the most common effect instrument is a valid one; i.e., that the modal effect is valid.
To estimate the mode, we look for the tightest cluster of points. These are the points contained in $\modalint$.
Intuitively, each estimate in this interval should be approximately valid and hence approximates the modal effect.
We take the average of these estimates to gain statistical strength.

The next theorem formalizes this intuition 
and shows, in particular, that ModeIV asymptotically identifies
and consistently estimates the causal effect.
\begin{theorem}
  Fix a test point $(t,x)$ and let $\wald{1}, \dots, \wald{k}$ be estimators of the causal effect of $t$ at $x$ corresponding to $k$ (possibly invalid) instruments. 
  E.g., $\wald{j} = \hat{f}_j(t,x)$.
  Denote the true effect as $\beta = E[y|\text{do}(t), x]$.
  Suppose that
  \begin{enumerate}
  \item  (consistent estimators) $\wald{j} \to \beta_j$ almost surely for each instrument. In particular, $\beta_j = \beta$ whenever the $j$th instrument is valid.
  \item (modal validity) At least $p\%$ of the instruments are valid, and no more than $p\% - \epsilon$ of the invalid instruments agree on an effect. That is, $p\%$ of the instruments yield the same estimand if and only if all of those instruments are valid.
  \klb{Is the last statement true? Couldn't it be more than $p\%$ if some invalid instruments agree on the effect too? (This violates the ``only if''.)}
  \end{enumerate}
  Let $[\lb, \ub]$ be the smallest interval containing $p\%$ of the instruments and let $\modalint = \{i : \lb < \wald{i} < \ub\}$.
  Then,
  \[
    \sum_{i \in \modalint} \hat{w}_i \wald{i} \to \beta
\]
almost surely, where $\hat{w}_i,w_i$ are any non-negative set of weights such that each $\hat{w}_i \to w_i$ a.s. and $\sum_{i \in \modalint} w_i = 1$.
Further suppose that the individual estimators are also asymptotically normal,
\[
  \sqrt{n}\wald{j} \to N(\beta_j,\sigma_j^2)
\]
for each instrument. Then it also holds that the modal estimator is asymptotically normal:
  \[
   \sqrt{n} \sum_{i \in \modalint} \hat{w}_i \wald{i} \to N(\beta, \sum_iw_i^2\sigma_i^2).
 \]
\label{thm:consistent}
\end{theorem}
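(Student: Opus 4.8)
The plan is to treat both claims as consequences of a single structural fact: asymptotically the estimates organize into clusters, and the unique cluster of size at least $p\%$ is the one at $\beta$. First I would record that, since there are only finitely many instruments, the almost-sure limits $\beta_1,\dots,\beta_k$ take finitely many distinct values $v_1=\beta,v_2,\dots,v_r$. By modal validity the value $\beta$ is shared by at least $p\%$ of the instruments (all valid ones, plus possibly a few invalid ones that coincidentally converge to $\beta$), whereas every other value is shared by at most $p\%-\epsilon$ of them. Hence there is a separation gap $\Delta:=\min_{a\neq b}|v_a-v_b|>0$ between distinct limit values, and the $\beta$-cluster is the only one large enough to fill a $p\%$ interval on its own.

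For consistency, fix $\eta\in(0,\Delta/3)$. On the almost-sure event that every $\wald{j}$ is within $\eta$ of its limit $\beta_j$ (a finite intersection of a.s. events, hence a.s.), the estimates lie in disjoint bands $[v_a-\eta,v_a+\eta]$ that are pairwise more than $\Delta/3$ apart. The band around $\beta$ already contains at least $p\%$ of the estimates and has width at most $2\eta$, while any interval meeting two bands has width exceeding $\Delta-2\eta>2\eta$; therefore the smallest interval containing $p\%$ of the estimates is the tight one about $\beta$, giving $\lb,\ub\in[\beta-\eta,\beta+\eta]$ and $\ub-\lb\le 2\eta$. Consequently $\modalint\subseteq\{j:\beta_j=\beta\}$, so every selected estimate satisfies $\wald{i}\to\beta$. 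Writing the normalized weighted average's error as $\sum_{i\in\modalint}\hat w_i(\wald{i}-\beta)$ and bounding it by $\max_{i:\beta_i=\beta}|\wald{i}-\beta|$ (using $\hat w_i\ge0$ and $\sum_{i\in\modalint}\hat w_i=1$) yields convergence to $0$, since $\eta$ was arbitrary.

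For asymptotic normality I would first upgrade the containment $\modalint\subseteq\{j:\beta_j=\beta\}$ to the statement that $\modalint$ is, almost surely and eventually, a fixed deterministic index set $S$. This is immediate in the regime where exactly $p\%$ of the instruments are valid, since then the only way to capture $p\%$ of the estimates in a vanishing interval is to take precisely the valid set, so $S=\valid$. Freezing $\modalint=S$, I would write $\sqrt n\big(\sum_{i\in S}\hat w_i\wald{i}-\beta\big)=\sum_{i\in S}\hat w_i\,\sqrt n(\wald{i}-\beta)$ and apply Slutsky's theorem using $\hat w_i\to w_i$ together with the assumed marginal limits $\sqrt n(\wald{i}-\beta)\Rightarrow N(0,\sigma_i^2)$. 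Invoking the independence of the candidate instruments to conclude asymptotic independence of the $\wald{i}$ across $i\in S$, the limit is a sum of independent centered Gaussians, namely $N\big(0,\sum_{i\in S}w_i^2\sigma_i^2\big)$; re-centering gives the stated $N(\beta,\sum_i w_i^2\sigma_i^2)$.

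The main obstacle is precisely the step that lets me freeze $\modalint$. When the valid cluster strictly exceeds the $p\%$ threshold, which of its members fall inside the smallest $p\%$ interval is decided by the $O(n^{-1/2})$ fluctuations themselves, so $\modalint$ need not converge to a deterministic set and the limit law becomes a selection among order statistics rather than a single Gaussian; the clean statement should therefore be read in the regime where the selected set coincides with the valid set. A secondary point requiring care is the diagonal form of the limiting variance: the cross terms vanish only under (asymptotic) independence of the per-instrument estimators, which I would justify from the assumed independence of the candidate instruments; without it one would instead obtain the full quadratic form $\sum_{i,j}w_iw_j\Sigma_{ij}$.
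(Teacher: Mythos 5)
Your consistency argument is essentially the paper's own proof, made rigorous. The paper argues exactly the same way: the valid estimates cluster in an interval of vanishing width, the limits of the invalid estimates are ``spaced out'' by modal validity, so the smallest interval containing $p\%$ of the estimates eventually contains only estimates converging to $\beta$, and the conclusion then ``follows by continuous mapping.'' Your band-separation construction fills in that sketch. One small repair: the separation constant should be $\eta < \Delta/4$, not $\Delta/3$ --- for $\eta \in [\Delta/4, \Delta/3)$ the inequality $\Delta - 2\eta > 2\eta$ that you invoke is false, so an interval of width $2\eta$ could in principle bridge two bands. Since you send $\eta \to 0$ anyway, tightening the constant changes nothing else.

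Where you genuinely go beyond the paper is the asymptotic-normality step, and both caveats you raise are correct; the paper's one-line appeal to continuous mapping glosses over exactly these points. First, the Slutsky argument does require $\modalint$ to stabilize at a deterministic set. When strictly more than $p\%$ of the instruments are valid, membership in the smallest $p\%$-interval is decided by the $O(n^{-1/2})$ fluctuations themselves: with three valid instruments and a two-instrument interval, for instance, the rescaled limit is the average of the two closest of three Gaussians, which is not Gaussian. So the normality claim really does need the regime (or an added convention) in which $\modalint$ converges to the valid set --- this is a gap in the theorem as stated, not in your argument. Second, the diagonal limiting variance $\sum_i w_i^2 \sigma_i^2$ requires joint asymptotic normality with vanishing cross-covariances, which marginal normality of each $\wald{j}$ does not supply; and since the $\hat{f}_j$ are fit on the same sample, they can be correlated through shared noise in $y$ even when the instruments are mutually independent, in which case the correct limit has the full quadratic form $\sum_{i,j} w_i w_j \Sigma_{ij}$, as you say. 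In short: your proof matches the paper on what the paper actually proves, and where you hedge, you are identifying real weaknesses of the paper's statement and proof sketch rather than weaknesses of your own argument.
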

We defer the proof to the supplementary material.

Of course, the ModeIV procedure can be generalized to allow different estimators of the mode than
the one used in Steps 2 and 3.
The particular choice we make here has the advantage of being straightforward,
statistically stable, computationally inexpensive, and relatively insensitive to the choice of $V$. 
The procedure as a whole is, however, $k$ times more computationally expensive than running single estimation procedure at both training and test time.

\section{Experiments}

\label{sec:exp}

We studied ModeIV empirically in two simulation settings. First, we investigated the performance of ModeIV for non-linear effect estimation as the proportion of invalid instruments increased for various amounts of direct effect bias. Second, we applied ModeIV to a realistic Mendelian randomization simulation to estimate heterogeneous treatment effects. For all experiments, we use DeepIV \citep{hartford2017deep} as the nonlinear estimator. Full experimental details are given in the appendix.

\subsection{Biased demand simulation}
\label{sec:exp_demand}

We evaluated the effect of invalid instruments on estimation by modifying the low dimensional demand simulation from \citet{hartford2017deep} to include multiple candidate instruments. The \citeauthor{hartford2017deep} demand simulation models a scenario where the treatment effect varies as a nonlinear function\footnote{$\psi(t)=2\left((t-5)^{4} / 600+\exp \left[-4(t-5)^{2}\right]+t / 10-2\right)$. See the appendix for a plot of the function} of time $\psi(t)$, and observed covariates $x$.
\begin{align*}
    \textcolor{Maroon}{z_{1:k}}, \nu \sim \mathcal{N}(0, 1) \quad t\sim \text{unif}(0, 10) \quad &e\sim\mathcal{N}(\rho \nu, 1 - \rho^2),\qquad  p = 25 + (\textcolor{Maroon}{z^T \beta^{(zp)}} + 3)\psi(t) + \nu \\
    y = 100 + 10x^T \beta^{(x)} \psi(t)\, + &\underbrace{(x^T \beta^{(x)} \psi(t) - 2)}_{\text{Treatment effect}}p\, + \underbrace{\textcolor{Maroon}{\gamma \sin(z^T \beta^{(zy)})}}_{\text{Exclusion violation}}  +\, e
\end{align*}
We highlight the differences between our data generating process and the \citeauthor{hartford2017deep} data generating process in red: we have $k$ instruments whose effect on the treatment is parameterized by $\beta^{(zx)}$, instead of a single instrument in the original; we include an exclusion violation term which introduces bias into standard IV approaches whenever $\gamma$ is non-zero. The vector $\beta^{(zy)}$ controls the direct effect of each instrument: invalid instruments have nonzero $\beta^{(zy)}_i$ coefficients, while valid instrument coefficients are zero.

\begin{figure}[t]
  \begin{center}
  \includegraphics{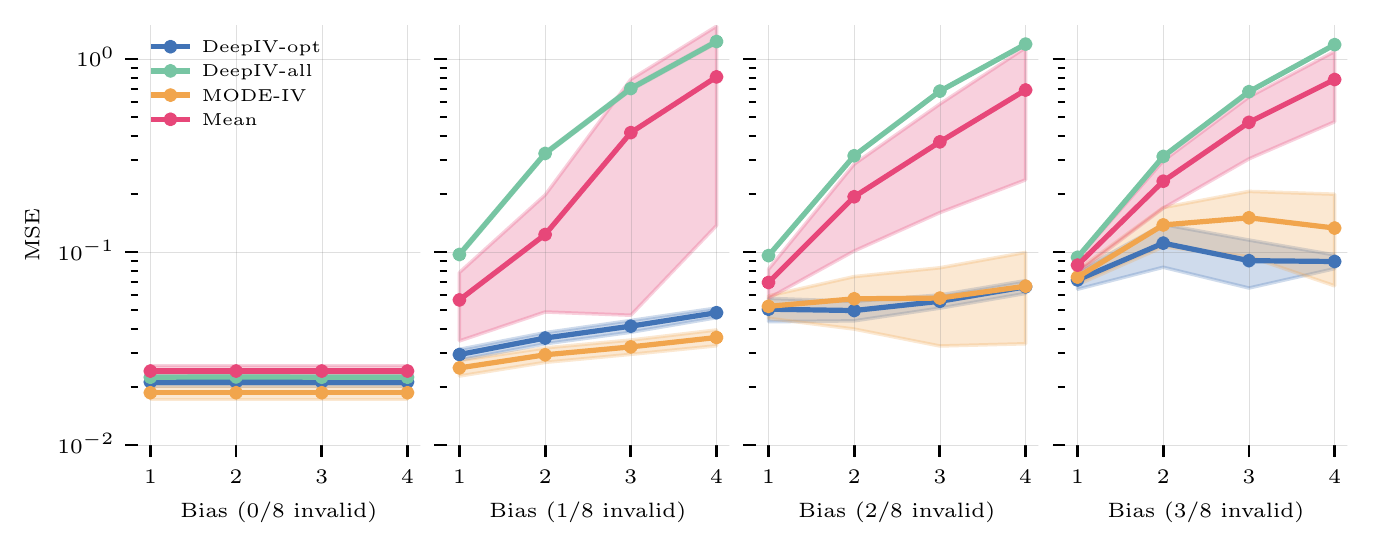}
  \caption{Performance on the biased demand simulation for various numbers of invalid instruments. The $x$-axis shows, $\gamma$, the scaling factor that scales the amount of exclusion violation bias.}
  \label{fig:bias}
  \end{center}
\end{figure}

 We fitted an ensemble of $k$ different DeepIV models that were each trained with a different instrument $z_i$. In Figure \ref{fig:bias}, we compare the performance of ModeIV with three baselines: DeepIV with oracle access to the set of valid instruments (DeepIV-opt); 
the ensemble mean (Mean); and a naive approach that fit a single instance of DeepIV treating all instruments as valid (DeepIV-all). The $x$-axis of the plots indicates the scaling factor $\gamma$, which scales the amount of bias introduced via violations of the exclusion restriction. 

All methods performed well when all the instruments were valid. 
Once the methods had to contend with invalid instruments, Mean and DeepIV-all performed significantly worse than ModeIV because of both methods' sensitivity to the biased instruments. ModeIV's mean squared error closely tracked that of the oracle method as the number of biased instruments increased, and the raw mean squared errors of both methods also increased as the number of valid instruments in the respective ensembles correspondingly fell. 

\begin{figure}[t]
  \begin{center}
  \includegraphics{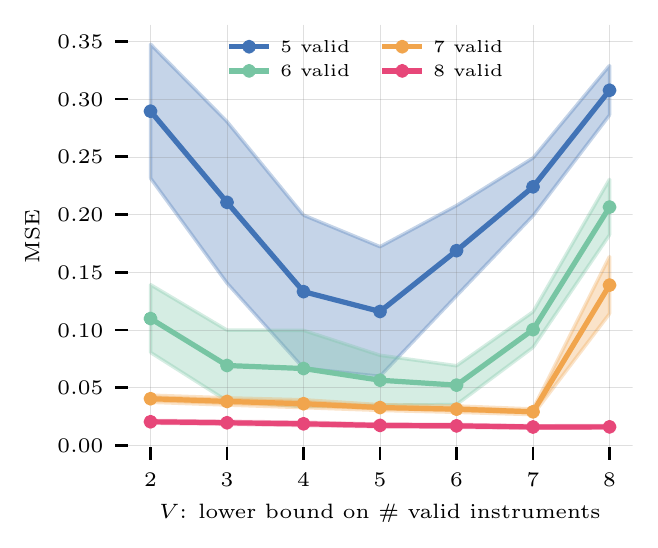}
  \includegraphics{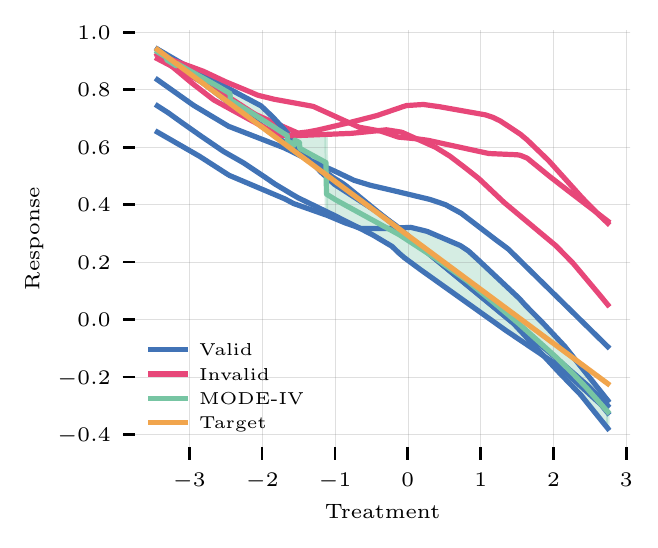}
  \caption{\emph{(Left)} ModeIV's sensitivity to the choice of number of valid instruments parameter $V$. Best performance is achieve when $V$ is equal to the true number of valid instruments, the method is relatively insensitive to more conservative choices of $V$. 
  \emph{(Right)} Example of the ModeIV algorithm on a biased demand simulation with 5 valid instruments and 3 invalid instruments. For the plot we fixed an arbitrary value of $x$ and $z$ and varied $t$. The region highlighted in green contains the 4 predictions that formed part of the modal interval for each given input $t'$. 
  The modal prediction is shown in solid green.
  }
  \label{fig:sens}
  \end{center}
\end{figure}

\paragraph{Sensitivity}
\label{sec:sensitivity}
When using ModeIV, one key practical question that an analyst faces is choosing $V$, the lower bound on the number of valid instruments. We evaluated the importance of this choice in figure \ref{fig:sens} (\emph{left}) by testing the performance of ModeIV across the full range of choices for $V$ with different numbers of biased instruments. We found that, as expected, the best performance was achieved when $V$ equaled the true number of valid instruments, but also that similar levels of performance could be achieved with more conservative choices of $V$. That said, with only 5 valid instruments, ModeIV tended to perform worse when $V$ was set  too small. To see why this is the case, notice that in figure \ref{fig:sens} (\emph{right}), there are a number of regions of the input space where the invalid instruments agreed by chance (e.g. $t\in [-1, 0.]$), so these regions bias ModeIV for small mode set sizes. Overall, we observed that setting $V$ to half the number of instruments tended to work well in practice.

Asymptotically, ModeIV remains consistent when fewer than half of the instruments are valid, but when this is the case there are far more ways that Assumption 2 of Theorem \ref{thm:consistent} can be violated. 
This is illustrated in \autoref{fig:sens} (\emph{right}) which shows that there are a number of regions where the bias instruments agree by chance. Because of this, we recommend only using ModeIV when one can assume that the majority of instruments are valid, unless one has prior knowledge to justify \emph{modal validity} without assuming the majority of instruments are valid\footnote{For example, if direct effects are strictly monotone and disagree, chance agreements among invalid instruments can only occur in a finite number of locations.}.

\paragraph{Selecting instruments?}
ModeIV constitutes a consistent method for making unbiased predictions but, somewhat counter-intuitively, it does not directly offer a way of inferring the set of valid instruments. For example, one might imagine identifying the set of candidates that most often form part of the modal interval $\modalint$.
The problem is that while candidates that fall within the modal interval $\modalint$ tend to be close to the mode, the interval can include invalid instruments that yielded an effect close to the mode by chance. \klb{Why does it have to be by chance? Couldn't it also be because an adversary set it up that way? I.e., we don't have a model of the invalid instruments as having to be random, do we? The paper is written in a way that seems to equivocate on this point; I think the real result is robust to adversarial manipulation, but much of the discussion is written as though invalid instruments must be uncorrelated.}
\klb{But if it is by chance, give the intuition about why the same invalid instrument can be chosen across most $x$ values. (I think this requires being more formal about what's meant by "by chance".)}
Since these invalid estimates are close to the truth, they do not hurt the estimate.
We can see this in \autoref{fig:sens} (\emph{right}) where invalid instruments form part of the modal interval in the region $t\in [-3.5, -2]$, without introducing bias.
\klb{I see that you've put this paragraph here because the last bit refers to the figure. But in general, this feels much more like discussion. It's tempting to move this to Section 6, though I guess then Figure 2 (right) would be unexplained unless it were moved to Section 6 also.}

\subsection{Mendelian randomization simulation}
\label{sec:exp_mr}

\begin{figure}[t]
  \begin{center}
  \includegraphics[]{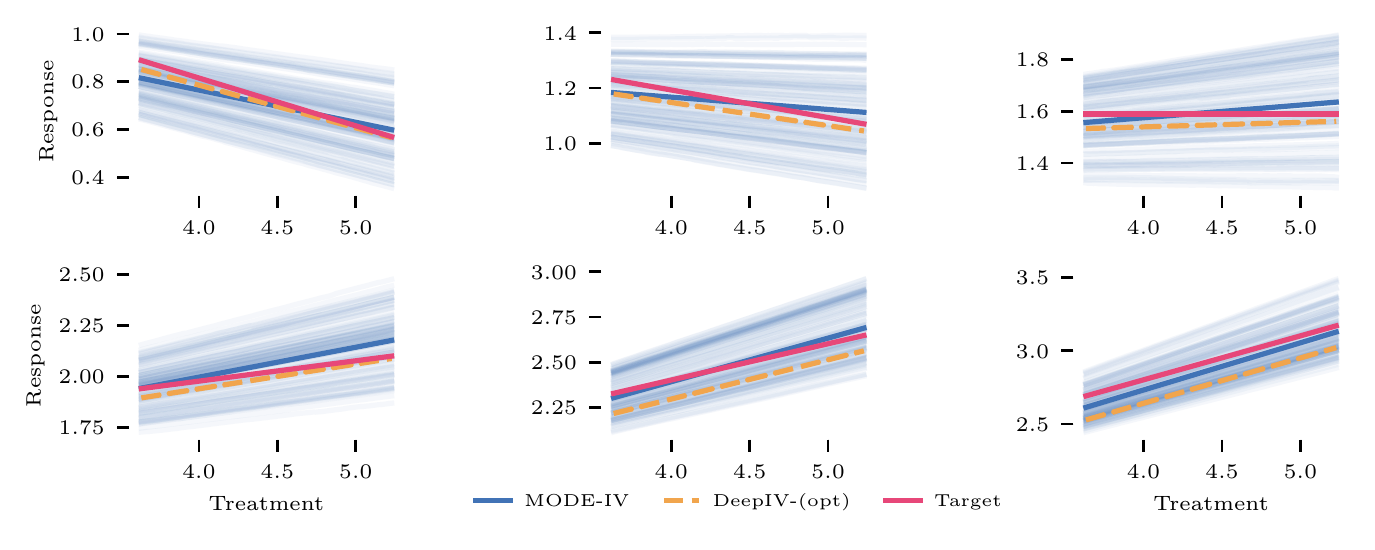}
  \caption{Estimated conditional dose--response curves for the Mendelian randomization simulation. Each light blue curve shows ModeIV's estimate $f(t, x)$ for some IID sample of $x$; each figure's dark curve represents the average over all samples of $x$. The six plots show the different subsets of the range, $x$, where true slope $\beta(x)$ is \emph{(left to right)} $-0.2, -0.1, 0., 0.1, 0.2$ and $0.3$ respectively. 
  }
  \label{fig:predicted}
  \end{center}
\end{figure}


We evaluate our approach on the simulated data adapted from \citet{hartwig2017robust}, which is designed to reflect violations of the exclusion restriction in Mendelian randomization studies. 

Instruments, $z_i$, represent SNPs---locations in the genetic sequence where there is frequent variation among people---modeled as random variables drawn from a Binomial$(2, p_i)$ distribution corresponding to the frequency with which an individual gets one or both rare genetic variants. The treatment and response are both continuous functions of the instruments with Gaussian error terms. The strength of the instrument's effect on the treatment, $\alpha_i$, and direct effect on the response, $\delta_i$, are both drawn from Uniform$(0.01, 0.2)$ distributions for all $i$. For all experiments we used 100 candidate instruments and varied the number of valid instruments from 50 to 100 in increments of 10; we set $\delta_i$ to 0 for all valid instruments. More formally,
\begin{align*}
    &z_i \sim \text{Binomial}(2, p_i) \quad \text{for $i$ in }[1\dots K],\quad \beta(x) := \text{round}(x^T\gamma^{(xt)}, 0.1).\\ 
    &t := \sum_{j=1}^K \alpha_j z_j + \rho u + \epsilon_x \quad y :=\beta(x) t + \sum_{j=1}^K \delta_j z_j + u + \epsilon_y  \nonumber
\end{align*}

In the original \citeauthor{hartwig2017robust} simulation, the treatment effect $\beta$ was fixed for all individuals. Here, we make the treatment effect vary as a function of observable characteristics to model a scenario where treatments may affect different sub-populations differently. We simulate this by making the treatment effect, $\beta(x)$, a sparse linear function of observable characteristics, $x\in R^{10}$, where 3 of the 10 coefficents, $\gamma_i^{(xt)}$ were sampled from $U(0.2, 0.5)$ and the remaining $\gamma_i^{(xt)}$ were set to $0$. We introduce non-linearity by rounding to the nearest 0.1, 
which makes the learning problem harder, while making it easier to visually show the differences between the fitted functions and their true targets.\\

Mendelian randomization problems tend to have low signal-to-noise ratios, where the treatment explains only 1-3\% of the response variance.\klb{Support this with a citation?} 
This makes the setting challenging for neural networks, which tend to perform best on low-noise regimes.\klb{Again, a cite, or an argument? I've certainly seen neural networks do great in the presence of lots of noise.} To address this, we leveraged the inductive bias that the data is conditionally linear in the treatment effect, using a neural network to parameterize the slope of the treatment variable rather than outputting the response directly. So, for these problems, we defined $\hat{f}(t,x) = g(\phi(x)) t + h(\phi(x))$, where $g(\cdot)$ and $h(\cdot)$ are linear layers that act on a shared representation $\phi(x)$.

The general trends we observed on the Mendelian randomization 
simulation, summarized in Table \ref{tab:mse}, were similar to those we observed in the biased demand simulation: DeepIV-all performed poorly; ModeIV closely tracked the performance of our oracle, DeepIV-opt. On this simulation the mean ensemble Mean achieved stronger performance, but still significantly underperformed ModeIV. 

\paragraph{Conditional average treatment effects} Figure \ref{fig:predicted} shows the predicted dose--response curves for a variety of different levels of the true treatment effect. The six plots correspond to six different subspaces of $x$ which all have the same true conditional treatment effect. Each of the light blue lines shows ModeIV's prediction for a different value of $x$. 
The model is not told that the true $\beta$ is constant for each of these sub-regions, but instead has to learn that from data so there is some variation in the slope of each prediction. Despite this, the majority of predicted curves match the sign of the treatment effect and closely match the ground truth slope. 
The average absolute bias in ModeIV's conditional average treatment effect estimation was 0.01 larger than that of DeepIV on the valid instruments, for true effect sizes that range between -0.3 and 0.3 (see Table \ref{tab:grad_bias} in the appendix for details).

\begin{table}
\centering
\resizebox{\columnwidth}{!}{%
\begin{tabular}{lllllll}
\toprule
Model &                         50\% valid  &                         60\% valid  &                         70\% valid  &                         80\% valid  &                         90\% valid  &                         100\% valid \\
\midrule
DeepIV (valid) &  0.035 \tiny{$\pm$ (0.001)} &  0.035 \tiny{$\pm$ (0.001)} &  0.034 \tiny{$\pm$ (0.001)} &  0.034 \tiny{$\pm$ (0.001)} &    0.032 \tiny{$\pm$ (0.0)} &  0.024 \tiny{$\pm$ (0.001)} \\
ModeIV 30\%   &  0.037 \tiny{$\pm$ (0.001)} &  0.037 \tiny{$\pm$ (0.001)} &  0.038 \tiny{$\pm$ (0.001)} &  0.039 \tiny{$\pm$ (0.001)} &  0.041 \tiny{$\pm$ (0.001)} &  0.032 \tiny{$\pm$ (0.001)} \\
ModeIV 50\%   &  0.037 \tiny{$\pm$ (0.001)} &  0.036 \tiny{$\pm$ (0.001)} &  0.038 \tiny{$\pm$ (0.001)} &  0.039 \tiny{$\pm$ (0.001)} &   0.04 \tiny{$\pm$ (0.001)} &  0.032 \tiny{$\pm$ (0.001)} \\
Mean Ensemble           &  0.041 \tiny{$\pm$ (0.001)} &  0.041 \tiny{$\pm$ (0.001)} &  0.043 \tiny{$\pm$ (0.001)} &  0.043 \tiny{$\pm$ (0.001)} &  0.045 \tiny{$\pm$ (0.001)} &  0.036 \tiny{$\pm$ (0.001)} \\
DeepIV (all)   &  0.099 \tiny{$\pm$ (0.007)} &  0.116 \tiny{$\pm$ (0.005)} &  0.149 \tiny{$\pm$ (0.005)} &  0.149 \tiny{$\pm$ (0.005)} &  0.142 \tiny{$\pm$ (0.003)} &    0.025 \tiny{$\pm$ (0.0)} \\
\bottomrule
\end{tabular}%
}
\vspace{.4em}
\caption{Performance on the Mendelian randomization simulation for various proportions of valid instruments. The ensemble methods performed far better than the DeepIV model which treated all instruments as valid, and ModeIV gave significantly better performance than the mean ensemble, was close to the performance of DeepIV on the valid instruments.}
\label{tab:mse}
\end{table}




\section{Discussion and Limitations}
\label{sec:discussion}
The conventional wisdom for IV analysis is: if you have many (strong) instruments and sufficient data, you should use all of them so that your estimator can maximize statistical efficiency by weighting the instruments appropriately. This remains true in our setting---indeed, DeepIV trained on the valid instruments typically outperformed any of the ensemble techniques---but of course requires a procedure for identifying the set of valid instruments. In the absence of such a procedure, falsely assuming that all candidate instruments are valid can lead to large biases, as illustrated by the poor performance of DeepIV-all on simulations that included bias. ModeIV gives up some efficiency by filtering instruments, but it gains robustness to invalid instruments and in practice we found that the loss of efficiency was negligible. Of course, that empirical finding will vary across settings. A useful future direction would find a procedure for recovering the set of valid instruments to further reduce the efficiency trade-offs. \klb{This is where I'd put the earlier discussion and experimental results about whether we can identify the valid instruments.}

There are, however, some important settings where ModeIV either will not work or require more careful assumptions. First, 
our key assumption was that each valid instrument consistently estimates the same function, $f(t,x)$. In settings with discrete treatments, one typically only identifies a ``(conditional) local average treatment effect'' (CLATE / LATE respectively) for each instrument. The LATE for instrument $i$ can be thought of as the average treatment effect for the sub-population that change their behavior in response to a change in the value of instrument $i$; if the LATEs differ across instruments, this implies that each instrument will result in a different estimate of $E[\hat{f}_i(t,x)]$ regardless of whether any of the instruments are invalid. In such settings, ModeIV will return the average of the $V$ closest $\hat{f}_i(t,x)$'s, but one would need additional assumptions on how these estimates cluster relative to biased estimates to apply any causal interpretation to this quantity.
%
The alternative is the approach that we take here: assume that a common function $f(t,x)$ is shared across all units and allow for heterogeneous treatment effects by allowing the treatment effect to vary as a function of observed covariates $x$. This shared heterogeneous effect assumption is weaker than prior work on robust IV, which requires a ``constant effect'' effect assumption that every individual responds in exactly the same way to the treatment via a linear parameter, $\beta$.

Second, we have focused on settings where each instrument is independent. 
In principle ModeIV extends to settings where some instruments are confounded or all the instruments share a common cause, but the conditions for valid inference are more delicate because one has to ensure all backdoor paths between the instruments and response are blocked (see \cref{sec:relax}).

\section*{Broader Impact}
IV methods are important tools in the causal inference toolbox. These methods have been applied to study causal effects across a wide range of settings, spanning economic policies, phenotypes that may cause disease, and recommendation algorithms. Typically, analysts use their best expert judgment to assess whether various candidate instruments satisfy the exclusion restriction and then proceed with effect estimation. However, this judgement is both difficult to make and highly consequential: biased instruments can invalidate the analyst's conclusions. 

This paper offers the analyst a potentially easier alternative: assuming that some fixed proportion of instruments are valid.
With this, the paper provides a method for estimating causal effects in the presence of invalid instruments, backed by theoretical guarantees. Because ModeIV captures nonlinear effects, it may be better suited to applications in genetics and economics than previous methods.

Of course, caveats still apply. As with all causal inference methods, ModeIV still requires an analyst to make assumptions and to assess potentially delicate, mathematical statements. Although we have striven to make the limitations of our method explicit and to provide guidelines where possible, negative impacts could arise if analysts apply our method without prudence and thus obtain invalid causal conclusions, particularly if these are offered as policy suggestions. To mitigate the potential for such negative impacts, we have clearly noted when ModeIV should not be applied. Beyond that, we recommend that analysts considering applying this or any other causal estimation procedure conduct as much sensitivity analysis and external validation as possible.



\begin{ack}
This work was supported by Compute Canada, a GPU grant from NVIDIA, an NSERC Discovery Grant, a DND/NSERC Discovery Grant Supplement, a CIFAR Canada AI Research Chair at the Alberta Machine Intelligence Institute, and DARPA award FA8750-19-2-0222, CFDA\# 12.910, sponsored by the Air Force Research Laboratory.
\end{ack}

\bibliography{refs}

\begin{thebibliography}{29}
\providecommand{\natexlab}[1]{#1}
\providecommand{\url}[1]{\texttt{#1}}
\expandafter\ifx\csname urlstyle\endcsname\relax
  \providecommand{\doi}[1]{doi: #1}\else
  \providecommand{\doi}{doi: \begingroup \urlstyle{rm}\Url}\fi

\bibitem[Angrist and Pischke(2008)]{angrist2008mostly}
J.~D. Angrist and J.-S. Pischke.
\newblock \emph{Mostly harmless econometrics: An empiricist's companion}.
\newblock Princeton university press, 2008.

\bibitem[Bennett et~al.(2019)Bennett, Kallus, and Schnabel]{bennett2019deep}
A.~Bennett, N.~Kallus, and T.~Schnabel.
\newblock Deep generalized method of moments for instrumental variable
  analysis.
\newblock \emph{arXiv preprint arXiv:1905.12495}, 2019.
\newblock URL \url{https://arxiv.org/abs/1905.12495}.

\bibitem[Bowden et~al.(2015)Bowden, Davey~Smith, and Burgess]{Bowden2015}
J.~Bowden, G.~Davey~Smith, and S.~Burgess.
\newblock {Mendelian randomization with invalid instruments: effect estimation
  and bias detection through Egger regression}.
\newblock \emph{International Journal of Epidemiology}, 44\penalty0
  (2):\penalty0 512--525, 06 2015.

\bibitem[Breiman(1996)]{Breiman1996}
L.~Breiman.
\newblock Bagging predictors.
\newblock \emph{Machine Learning}, 24\penalty0 (2):\penalty0 123--140, 1996.

\bibitem[Chernoff(1964)]{chernoff_estimation_1964}
H.~Chernoff.
\newblock Estimation of the mode.
\newblock \emph{Annals of the Institute of Statistical Mathematics},
  16\penalty0 (1):\penalty0 31--41, Dec. 1964.

\bibitem[Dalenius(1965)]{dalenius_mode--neglected_1965}
T.~Dalenius.
\newblock The {Mode}--{A} {Neglected} {Statistical} {Parameter}.
\newblock \emph{Journal of the Royal Statistical Society. Series A (General)},
  128\penalty0 (1):\penalty0 110, 1965.

\bibitem[Darolles et~al.(2011)Darolles, Fan, Florens, and
  Renault]{darolles2011nonparametric}
S.~Darolles, Y.~Fan, J.-P. Florens, and E.~Renault.
\newblock Nonparametric instrumental regression.
\newblock \emph{Econometrica}, 79\penalty0 (5):\penalty0 1541--1565, 2011.

\bibitem[Davey~Smith and Ebrahim(2003)]{davey2003mendelian}
G.~Davey~Smith and S.~Ebrahim.
\newblock ‘{M}endelian randomization’: can genetic epidemiology contribute
  to understanding environmental determinants of disease?
\newblock \emph{International journal of epidemiology}, 32\penalty0
  (1):\penalty0 1--22, 2003.

\bibitem[Frandsen et~al.(2019)Frandsen, Lefgren, and
  Leslie]{frandsen2019judging}
B.~R. Frandsen, L.~J. Lefgren, and E.~C. Leslie.
\newblock Judging judge fixed effects.
\newblock Technical report, National Bureau of Economic Research, 2019.

\bibitem[Freund and Schapire(1995)]{freund1995desicion}
Y.~Freund and R.~E. Schapire.
\newblock A desicion-theoretic generalization of on-line learning and an
  application to boosting.
\newblock In \emph{European conference on computational learning theory}, pages
  23--37. Springer, 1995.

\bibitem[Hansen(1982)]{hansen1982}
L.~P. Hansen.
\newblock Large sample properties of generalized method of moments estimators.
\newblock \emph{Econometrica}, 50\penalty0 (4):\penalty0 1029--1054, 1982.

\bibitem[Hartford et~al.(2017)Hartford, Lewis, Leyton-Brown, and
  Taddy]{hartford2017deep}
J.~Hartford, G.~Lewis, K.~Leyton-Brown, and M.~Taddy.
\newblock Deep {IV}: A flexible approach for counterfactual prediction.
\newblock In \emph{Proceedings of the 34th International Conference on Machine
  Learning-Volume 70}, pages 1414--1423. JMLR. org, 2017.

\bibitem[Hartwig et~al.(2017)Hartwig, Davey~Smith, and
  Bowden]{hartwig2017robust}
F.~P. Hartwig, G.~Davey~Smith, and J.~Bowden.
\newblock Robust inference in summary data {M}endelian randomization via the
  zero modal pleiotropy assumption.
\newblock \emph{International journal of epidemiology}, 46\penalty0
  (6):\penalty0 1985--1998, 2017.

\bibitem[Hemani et~al.(2018)Hemani, Bowden, and
  Davey~Smith]{hemani2018evaluating}
G.~Hemani, J.~Bowden, and G.~Davey~Smith.
\newblock Evaluating the potential role of pleiotropy in {M}endelian
  randomization studies.
\newblock \emph{Human molecular genetics}, 27\penalty0 (2):\penalty0 195--208,
  2018.

\bibitem[Kang et~al.(2016)Kang, Zhang, Cai, and Small]{kang2016instrumental}
H.~Kang, A.~Zhang, T.~T. Cai, and D.~S. Small.
\newblock Instrumental variables estimation with some invalid instruments and
  its application to {M}endelian randomization.
\newblock \emph{Journal of the American statistical Association}, 111\penalty0
  (513):\penalty0 132--144, 2016.

\bibitem[Kling(2006)]{kling2006incarceration}
J.~R. Kling.
\newblock Incarceration length, employment, and earnings.
\newblock \emph{American Economic Review}, 96\penalty0 (3):\penalty0 863--876,
  2006.

\bibitem[Koles{\'a}r et~al.(2015)Koles{\'a}r, Chetty, Friedman, Glaeser, and
  Imbens]{kolesar2015identification}
M.~Koles{\'a}r, R.~Chetty, J.~Friedman, E.~Glaeser, and G.~W. Imbens.
\newblock Identification and inference with many invalid instruments.
\newblock \emph{Journal of Business \& Economic Statistics}, 33\penalty0
  (4):\penalty0 474--484, 2015.

\bibitem[Kourentzes et~al.(2014)Kourentzes, Barrow, and
  Crone]{kourentzes_neural_2014}
N.~Kourentzes, D.~K. Barrow, and S.~F. Crone.
\newblock Neural network ensemble operators for time series forecasting.
\newblock \emph{Expert Systems with Applications}, 41\penalty0 (9):\penalty0
  4235--4244, July 2014.

\bibitem[Lewis and Syrgkanis(2018)]{lewis2018adversarial}
G.~Lewis and V.~Syrgkanis.
\newblock Adversarial generalized method of moments.
\newblock \emph{arXiv preprint arXiv:1803.07164}, 2018.

\bibitem[Mueller-Smith(2015)]{mueller2015criminal}
M.~Mueller-Smith.
\newblock The criminal and labor market impacts of incarceration.
\newblock \emph{Unpublished Working Paper}, 18, 2015.
\newblock URL
  \url{https://sites.lsa.umich.edu/mgms/wp-content/uploads/sites/283/2015/09/incar.pdf}.

\bibitem[Newey and Powell(2003)]{newey2003instrumental}
W.~K. Newey and J.~L. Powell.
\newblock Instrumental variable estimation of nonparametric models.
\newblock \emph{Econometrica}, 71\penalty0 (5):\penalty0 1565--1578, 2003.

\bibitem[Parzen(1962)]{parzen_estimation_1962}
E.~Parzen.
\newblock On {Estimation} of a {Probability} {Density} {Function} and {Mode}.
\newblock \emph{The Annals of Mathematical Statistics}, 33\penalty0
  (3):\penalty0 1065--1076, Sept. 1962.

\bibitem[Paszke et~al.(2019)Paszke, Gross, Massa, Lerer, Bradbury, Chanan,
  Killeen, Lin, Gimelshein, Antiga, Desmaison, Kopf, Yang, DeVito, Raison,
  Tejani, Chilamkurthy, Steiner, Fang, Bai, and Chintala]{NEURIPS2019_9015}
A.~Paszke, S.~Gross, F.~Massa, A.~Lerer, J.~Bradbury, G.~Chanan, T.~Killeen,
  Z.~Lin, N.~Gimelshein, L.~Antiga, A.~Desmaison, A.~Kopf, E.~Yang, Z.~DeVito,
  M.~Raison, A.~Tejani, S.~Chilamkurthy, B.~Steiner, L.~Fang, J.~Bai, and
  S.~Chintala.
\newblock Pytorch: An imperative style, high-performance deep learning library.
\newblock In H.~Wallach, H.~Larochelle, A.~Beygelzimer, F.~d~Alch\'{e}-Buc,
  E.~Fox, and R.~Garnett, editors, \emph{Advances in Neural Information
  Processing Systems 32}, pages 8024--8035. Curran Associates, Inc., 2019.

\bibitem[Puli and Ranganath(2019)]{Puli2019}
A.~M. Puli and R.~Ranganath.
\newblock Generalized control functions via variational decoupling.
\newblock \emph{CoRR}, abs/1907.03451, 2019.
\newblock URL \url{http://arxiv.org/abs/1907.03451}.

\bibitem[Singh et~al.(2019)Singh, Sahani, and Gretton]{Singh2019}
R.~Singh, M.~Sahani, and A.~Gretton.
\newblock Kernel instrumental variable regression.
\newblock \emph{arXiv preprint arXiv:1906.00232}, 2019.
\newblock URL \url{http://arxiv.org/abs/1906.00232}.

\bibitem[Stock and Watson(2004)]{stock_combination_2004}
J.~H. Stock and M.~W. Watson.
\newblock Combination forecasts of output growth in a seven-country data set.
\newblock \emph{Journal of Forecasting}, 23\penalty0 (6):\penalty0 405--430,
  Sept. 2004.

\bibitem[Swanson and Hern{\'a}n(2018)]{swanson2018challenging}
S.~A. Swanson and M.~A. Hern{\'a}n.
\newblock The challenging interpretation of instrumental variable estimates
  under monotonicity.
\newblock \emph{International journal of epidemiology}, 47\penalty0
  (4):\penalty0 1289--1297, 2018.

\bibitem[Venter(1967)]{venter_estimation_1967}
J.~H. Venter.
\newblock On {Estimation} of the {Mode}.
\newblock \emph{The Annals of Mathematical Statistics}, 38\penalty0
  (5):\penalty0 1446--1455, Oct. 1967.

\bibitem[Wang and Blei(2019)]{wang2019blessings}
Y.~Wang and D.~M. Blei.
\newblock The blessings of multiple causes.
\newblock \emph{Journal of the American Statistical Association}, \penalty0
  (just-accepted):\penalty0 1--71, 2019.

\end{thebibliography}
\bibliographystyle{abbrvnat}

\appendix
\section{Appendix}
\subsection{Proof of Theorem 1}
The next theorem formalizes this intuition and shows, in particular, that MODE-IV asymptotically identifies
and consistently estimates the causal effect.
\begin{theorem}
  Fix a test point $(t,x)$ and let $\wald{1}, \dots, \wald{k}$ be estimators of the causal effect of $t$ at $x$ corresponding to $k$ (possibly invalid) instruments. 
  E.g., $\wald{j} = \hat{f}_j(t,x)$.
  Denote the true effect as $\beta = E[y|\text{do}(t), x]$.
  Suppose that
  \begin{enumerate}
  \item  (consistent estimators) $\wald{j} \to \beta_j$ almost surely for each instrument. In particular, $\beta_j = \beta$ whenever the $j$th instrument is valid.
  \item (valid mode) At least $p\%$ of the instruments are valid, and no more than $p\% - \epsilon$ of the invalid instruments agree on an effect. That is, $p\%$ of the instruments yield the same estimand if and only if all of those instruments are valid. \vv{awkard sentence}
  \end{enumerate}
  Let $[\lb, \ub]$ be the smallest interval containing $p\%$ of the instruments and let $\modalint = \{i : \lb < \wald{i} < \ub\}$.
  Then,
  \[
    \sum_{i \in \modalint} \hat{w}_i \wald{i} \to \beta
\]
almost surely, where $\hat{w}_i,w_i$ are any non-negative set of weights such that each $\hat{w}_i \to w_i$ a.s. and $\sum_{i \in \modalint} w_i = 1$.
Further suppose that the individual estimators are also asymptotically normal,
\[
  \sqrt{n}\wald{j} \to N(\beta_j,\sigma_j^2)
\]
for each instrument. Then it also holds that the modal estimator is asymptotically normal:
  \[
   \sqrt{n} \sum_{i \in \modalint} \hat{w}_i \wald{i} \to N(\beta, \sum_iw_i^2\sigma_i^2).
 \]
\end{theorem}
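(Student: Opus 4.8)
The plan is to exploit the almost-sure convergence of the $\wald{j}$ together with the finiteness of the instrument set to show that, eventually, the modal interval $[\lb,\ub]$ traps only valid instruments, after which both conclusions follow from elementary limit arguments and Slutsky's theorem. Because there are only finitely many instruments, Assumption~1 upgrades to \emph{uniform} almost-sure convergence: on a probability-one event, for every $\delta>0$ there is an $N$ with $|\wald{j}-\beta_j|<\delta$ for all $j$ once $n>N$. I would argue on this event throughout.

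First I would establish the key lemma that eventually $\modalint\subseteq\valid$ almost surely. Let $d>0$ be the minimum distance between any two distinct values in the finite multiset of limits $\{\beta_j\}$, which is strictly positive since there are finitely many distinct limits. At least $p\%$ of the instruments are valid and all converge to $\beta$, so for large $n$ their estimates lie in an interval of width at most $2\delta$; hence the smallest interval containing $p\%$ of the estimates satisfies $\ub-\lb\le 2\delta$. Taking $\delta<d/4$, an interval this narrow cannot contain estimates converging to two different limits, so all indices in $\modalint$ share a single limit $v$ attained by at least $p\%$ of the instruments. By Assumption~2, any $p\%$-sized cluster sharing an estimand must consist entirely of valid instruments, forcing $v=\beta$ and $\modalint\subseteq\valid$; in particular $\wald{i}\to\beta$ for each $i\in\modalint$.

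Consistency then follows quickly. On the probability-one event and for large $n$, every index in $\modalint$ is valid, so $\sum_{i\in\modalint}\hat w_i\wald{i}$ is a weighted average of quantities each converging to $\beta$, with non-negative weights obeying $\hat w_i\to w_i$ and $\sum_{i\in\modalint}w_i=1$. Since $\modalint$ is random and need not stabilise to a fixed index set, I would pass to subsequences: there are only finitely many subsets $S\subseteq\valid$, and along any subsequence on which $\modalint\equiv S$ the sum tends to $\beta\sum_{i\in S}w_i=\beta$; as every subsequential limit is $\beta$, the whole sequence converges to $\beta$ almost surely.

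For asymptotic normality I would write $\sqrt{n}\big(\sum_{i\in\modalint}\hat w_i\wald{i}-\beta\big)=\sqrt{n}\sum_{i\in\modalint}\hat w_i(\wald{i}-\beta)$, using $\beta_i=\beta$ on $\modalint$ and the exact normalisation $\sum_{i\in\modalint}\hat w_i=1$. Conditioning on the eventual mode set $S$ via the same finitely-many-subsets reduction, the right-hand side is a linear combination of the asymptotically normal and asymptotically independent terms $\sqrt{n}(\wald{i}-\beta)$ with coefficients $\hat w_i\to w_i$, so Slutsky's theorem delivers the limit $N\big(0,\sum_{i\in S}w_i^2\sigma_i^2\big)$, matching the stated variance. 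The hard part is precisely this random selection: when strictly more than $p\%$ of the instruments are valid, the modal interval keeps the \emph{tightest} $p\%$-subset of valid estimates, so $\modalint$ need not converge to a fixed set and the selection is correlated with the very fluctuations it selects. The argument is cleanest when the number of valid instruments equals the $p\%$ target, which forces $\modalint$ to be the full valid set deterministically; the general case requires showing that conditioning on the selected subset preserves joint normality of the centred, scaled estimates, and this is where I expect to concentrate the work. I would also make explicit that the diagonal variance $\sum_i w_i^2\sigma_i^2$ presupposes asymptotic independence across instruments, which ought to appear among the hypotheses.
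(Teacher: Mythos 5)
Your consistency argument is correct and takes essentially the same route as the paper's: a trapping lemma showing that the modal interval eventually contains only valid instruments (the valid estimates cluster in an interval of shrinking width, while the limits of the invalid estimators are spaced out by the modal-validity assumption, so no competing interval of comparable width can capture $p\%$ of the estimates), followed by an elementary limit argument. Your write-up is in fact more careful than the paper's, which compresses your minimum-gap bookkeeping and your finitely-many-subsets/subsequence step into the single sentence ``the result follows by continuous mapping.''

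The gap is in the asymptotic-normality half, and you have put your finger on exactly the right spot without closing it. When the number of valid instruments strictly exceeds the $p\%$ threshold $m$, the set $\modalint$ is the tightest $m$-subset of the valid estimates and never stabilizes: writing $Z^n_i=\sqrt{n}(\wald{i}-\beta)$, the centred, scaled statistic equals $\sum_{i\in S(Z^n)}\hat w_i Z^n_i$, where $S$ selects the tightest cluster. Continuous mapping (the paper's own cited tool) then yields convergence in distribution to $\sum_{i\in S(Z)}w_i Z_i$, the weighted average of the tightest $m$-subset of the limiting Gaussian vector --- and this limit is \emph{not} Gaussian. For instance, with three valid instruments and $m=2$, the average of the closest pair of three independent normals has density proportional to a normal density reweighted by the (non-constant) conditional selection probability, which cannot equal any normal density. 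So the step you hoped to supply --- that conditioning on the selected subset preserves joint normality of the centred estimates --- is false in general, and no amount of work completes the proof along that route. The normality conclusion is clean only in the case you single out, where the mode set stabilizes (e.g., exactly $p\%$ of the instruments are valid, forcing $\modalint\to\valid$), and even then it additionally requires the joint asymptotic normality and cross-instrument independence that you correctly observe are missing from the hypotheses. Be aware that the paper's own proof has the identical hole: its one-line appeal to continuous mapping establishes your ``average over a random selection'' limit, not the stated normal limit. So your proposal is incomplete as a proof of the theorem as stated, but the incompleteness reflects a genuine defect in the theorem's normality claim rather than a missing idea on your side.
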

\begin{proof}
  First we argue that $\modalint$ converges to a set that contains only valid instruments. All valid instruments converge to a common value $\beta$. The distance between any two valid instruments is at most twice the distance between $\beta$ and the furthest valid instrument.
  Since at least $p\%$ of the instruments are valid, this means that there is an interval (containing the mode) with distance going to $0$ that contains $p\%$ of the instruments. Eventually this must be the smallest interval containing $p\%$ of the instruments, because the limting $\beta_j$ of the invalid instruments are spaced out by assumption.

  The result follows by continuous mapping. \vv{is there anything interesting to say about discontinuity points?}
\end{proof}

\subsection{Relaxing independence of instrumental variables}
\label{sec:relax}

\begin{figure}[t]
  \begin{center}
  \includegraphics[width=0.49\textwidth]{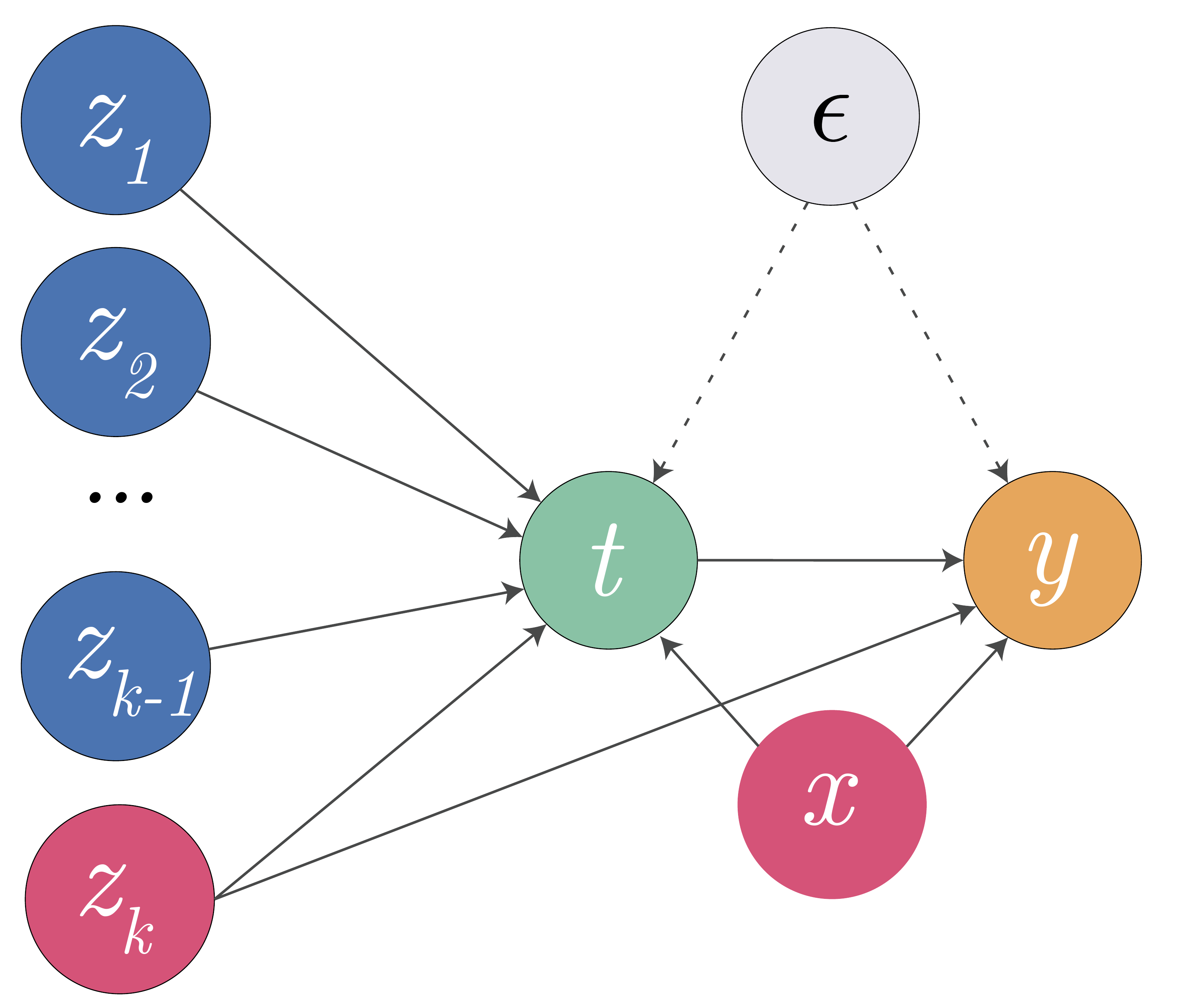}
  \includegraphics[width=0.49\textwidth]{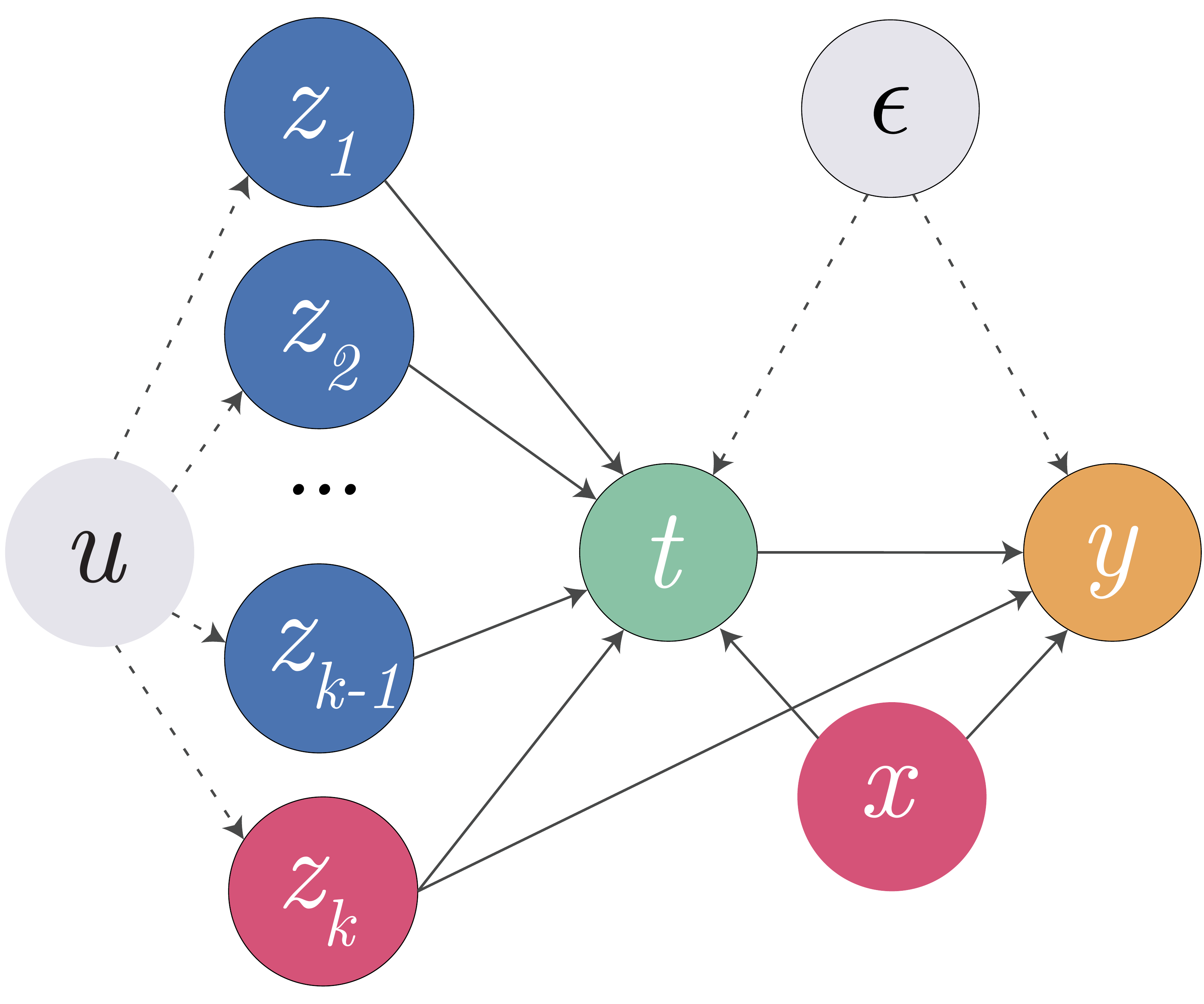}
  \caption{We primarily focus on the setting on the left where each candidate, $z_i$, is independent. I is also possible to apply the method in the setting on the right where the candidates share a common cause, more care is needed. See the discussion in section \ref{sec:relax}.}
  \label{fig:iv}
  \end{center}
\end{figure}

For simplicity, we presented ModeIV in the context of independent candidate instruments. This setting is shown in Figure \ref{fig:iv} (\emph{left}) where we have $k$ candidates, $\{z_i : i\in 1, \dots, k \}$, some of which are valid (shown in blue), and some of which are invalid (e.g. $z_k$ shown in pink has a direct effect on the response). This independent candidates setting is most common where the instruments are explicitly randomized: e.g. in judge fixed effects where the selection of judges is random. 

A more complex setting is shown in Figure \ref{fig:iv} (\emph{right}). Here, the candidates share a common cause, $u$. In this scenario, if $u$ is not observed, each of the previously valid instruments (e.g. $z_1$, $z_2$ and $z_{k-1}$ in the figure) are no longer valid because they fail the unconfounded instrument assumption via the backdoor path  $z_1 \leftarrow u \rightarrow z_k \rightarrow y$.
However, if we condition on all the candidates that have a direct effect on $y$ and treat them as observed confounders, we block this path which allows for valid inference. Of course we do not know which of the candidates have a direct effect, so when building an ensemble, for each candidate $z_i$, we treat all $z_{j \neq i}$ as observed confounds to block these potential backdoor paths. This addresses the issue as long as there is not some $z_{k+1}$ which is not part of our candidate set, but nevertheless opens up a backdoor path $z_1 \leftarrow u \rightarrow z_{k+1} \rightarrow y$.

If $u$ is observed, we can simply control for it. This suggests a natural alternative approach would be to try to estimate $u$ and control for its effect, using an approach analogous to \citet{wang2019blessings}. We plan to investigate this in future work. 

\subsection{Additional experimental details}
\label{sec:add_experiments}

\begin{table}
\centering
\resizebox{\columnwidth}{!}{%
\begin{tabular}{lllllll}
\toprule
{} &                           50  &                           60  &                           70  &                           80  &                           90  &                           100 \\
\midrule
DeepIV (valid) &   0.036 \tiny{$\pm$ (0.0048)} &    0.042 \tiny{$\pm$ (0.003)} &  0.0371 \tiny{$\pm$ (0.0032)} &   0.0364 \tiny{$\pm$ (0.003)} &  0.0326 \tiny{$\pm$ (0.0021)} &  0.0278 \tiny{$\pm$ (0.0021)} \\
MODE-IV 20     &  0.0525 \tiny{$\pm$ (0.0062)} &  0.0483 \tiny{$\pm$ (0.0049)} &  0.0478 \tiny{$\pm$ (0.0049)} &  0.0473 \tiny{$\pm$ (0.0048)} &  0.0466 \tiny{$\pm$ (0.0047)} &    0.04 \tiny{$\pm$ (0.0038)} \\
MODE-IV 30     &  0.0525 \tiny{$\pm$ (0.0062)} &  0.0483 \tiny{$\pm$ (0.0049)} &  0.0478 \tiny{$\pm$ (0.0049)} &  0.0473 \tiny{$\pm$ (0.0048)} &  0.0467 \tiny{$\pm$ (0.0047)} &  0.0399 \tiny{$\pm$ (0.0038)} \\
MODE-IV 40     &  0.0524 \tiny{$\pm$ (0.0062)} &  0.0483 \tiny{$\pm$ (0.0049)} &  0.0478 \tiny{$\pm$ (0.0049)} &  0.0473 \tiny{$\pm$ (0.0048)} &  0.0468 \tiny{$\pm$ (0.0047)} &  0.0399 \tiny{$\pm$ (0.0038)} \\
MODE-IV 50     &  0.0525 \tiny{$\pm$ (0.0062)} &  0.0483 \tiny{$\pm$ (0.0049)} &  0.0479 \tiny{$\pm$ (0.0049)} &  0.0474 \tiny{$\pm$ (0.0048)} &  0.0466 \tiny{$\pm$ (0.0047)} &  0.0398 \tiny{$\pm$ (0.0038)} \\
Mean           &  0.0529 \tiny{$\pm$ (0.0059)} &   0.0498 \tiny{$\pm$ (0.005)} &  0.0498 \tiny{$\pm$ (0.0052)} &  0.0461 \tiny{$\pm$ (0.0047)} &  0.0484 \tiny{$\pm$ (0.0048)} &   0.0403 \tiny{$\pm$ (0.004)} \\
Deepiv (all)   &   0.1637 \tiny{$\pm$ (0.011)} &  0.1744 \tiny{$\pm$ (0.0075)} &  0.2078 \tiny{$\pm$ (0.0069)} &   0.185 \tiny{$\pm$ (0.0087)} &  0.1387 \tiny{$\pm$ (0.0081)} &  0.0297 \tiny{$\pm$ (0.0018)} \\
\bottomrule
\end{tabular}
}
\vspace{.4em}
\caption{Average absolute bias in estimation of the conditional average treatment effect. The ensemble methods tended to have slightly larger bias than the optimal model, but far less than the naive approach which uses all instruments. The mean aggregation function performs relatively well on this task, but this approach comes with no guarantees, so it degrades in settings with more bias.}
\label{tab:grad_bias}
\end{table}

\paragraph{Network architectures and experimental setup}
All experiments used the same neural network architectures to build up hidden representations for both the treatment and response networks used in DeepIV, and differed only in their final layers. Given the number of experiments that needed to be run, hyper-parameter tuning would have been too expensive, so the hyper-parameters were simply those used in the original DeepIV paper. In particular, we used three hidden layers with 128, 64, and 32 units respectively and ReLU activation functions. The treatment networks all used mixture density networks with 10 mixture components and the response networks were trained using the two sample unbiased gradient loss \citep[see][equation 10]{hartford2017deep}. We used our own PyTorch \citep{NEURIPS2019_9015} re-implementation of DeepIV to run the experiments.

For all experiments, $10\%$ of the original training set was kept aside as a validation set. The demand simulations had 90 000 training examples, 10 000 validation examples and 50 000 test examples. The Mendelian randomization simulations had 360 000 training examples, 40 000 validation examples and 50 000 test examples. All mean squared error numbers reported in the paper are calculated with respect to the true $y$ (with no confounding noise added) on a uniform grid of 50 000 treatment points between the $2.5^{th}$ and $97.5^{th}$ percentiles of the training distribution. 95\% confidence intervals around mean performance were computed using Student's $t$ distribution.




The networks were trained on a large shared compute cluster that has around 100 000 CPU cores. Because each individual network was relatively quick to train (less than 10 minutes on a CPU), we used CPUs to train the networks. This allowed us to fit the large number of networks needed for the experiments. Each experiment was run across 30 different random seeds, each of which required 10 (demand simulation) and 102 (Mendialian randomization) network fits. In total, across all experimental setups, random seeds, ensembles, etc. approximately 100 000 networks were fit to run all the experiments.

\paragraph{Biased demand simulation}
The biased demand simulation code was modified from the \href{https://github.com/jhartford/DeepIV/blob/master/experiments/data_generator.py}{public DeepIV implementation}. In the original DeepIV implementation, both the treatment and response are transformed to make them approximately mean zero and standard deviation 1; we left these constants unchanged ($p_{\text{std}} = 3.7, p_{\mu} = 17.779, y_{\text{std}} = 158, y_\mu = -292.1$). The observed features include a time feature, $t\sim \text{unif}(0, 10)$, and $x \sim \text{Categorical}(\frac{1}{7}, \dots,\frac{1}{7})$, a one-hot encoding of 7 different equally likely ``customer types''; each type modifies the treatment via the coefficient $\beta^{(x)}=[1, 2, \dots, 7]^T$. These values are unchanged from the original \citeauthor{hartford2017deep} data generating process. 
We introduce multiple instruments, $z_{1:k}$, whose effect on the treatment, $p$, and response, $y$, is via two different linear maps $\beta^{(zp)}\in \Re^k$ and  $\beta^{(zy)}\in \Re^k$; each of the coefficients in these vectors are sampled independently so $\beta^{(z*)}_i\sim\text{unif}(0.5, 1.5)$, with the exception of the valid instruments where $\beta^{(zy)}_i = 0 $ for all $i\in \valid$. The $\gamma$ parameter scales the amount of bias introduced via exclusion violations; note that because the $\sin$ varies between $[-1,1]$, we scale up this bias by a factor of $60$ so that the effect it introduces is of the same order of magnitude as the variation in the original \citeauthor{hartford2017deep} data generating process (where $\text{std. dev}(y)\approx y_{\text{std}} = 158$).
The full data generating process is as follows,
\begin{align*}
    \textcolor{Maroon}{z_{1:k}}, \nu &\sim \mathcal{N}(0, 1) \quad t\sim \text{unif}(0, 10) \quad e\sim\mathcal{N}(\rho \nu, 1 - \rho^2)\\
    p' &= 25 + (\textcolor{Maroon}{z^T \beta^{(zp)}} + 3)\psi(t) + \nu \\
    y' &= 100 + 10x^T \beta^{(x)} \psi(t) + \underbrace{(x^T \beta^{(x)} \psi(t) - 2)}_{\text{Treatment effect}}p' + \underbrace{\textcolor{Maroon}{\gamma 60 \sin(z^T \beta^{(zy)})}}_{\text{Exclusion violation}}  + e \\
    p &= (p' - p_{\text{std}}) / p_{\mu} \qquad y = (y' - y_{\text{std}}) / y_{\mu}
\end{align*}

\paragraph{Mendelian randomization simulation}

This data generating process closely follows that of Simulation 1 of \citet{hartwig2017robust}, but was modified to include heterogeneous treatment effects. This description is an abridged version of that given in \citeauthor{hartwig2017robust}; we refer the reader to \citet{hartwig2017robust} for more detail on the choice of parameters, etc.. The instruments are the genetic variables, $z_i$, which were generated by sampling from a Binomial (2, $p_i$) distribution, with $p_i$ drawn from a Uniform(0.1 ,0.9) distribution, to mimic bi-allelic SNPs in Hardy-Weinberg equilibrium. The parameters that modulate the genetic variable effect on the treatment are given by, $\alpha_i = \frac{\sqrt{0.1}}{\sigma_{zx}}\nu_i$, where $\nu_i\sim \text{unif}(0.01, 0.2)$ and $\sigma_{zx} = \text{std. dev}(\sqrt{0.1} \sum_i \nu_i z_i)$. 
Similarly, the exclusion violation parameters, $\delta_i=\frac{|\invalid|\sqrt{0.1}}{k\sigma_{zy}}\nu_i$, where again $\nu_i\sim \text{unif}(0.01, 0.2)$ and $\sigma_{zy} = \text{std. dev}(\sqrt{0.1} \sum_i \nu_i z_i)$. Note that $\delta_i$ is scaled by $\frac{|\invalid|}{k}$ (the proportion of invalid instruments), which ensures that the average amount of bias introduced is constant as the number of invalid instruments vary.
Error terms $u$, $\epsilon_{x}$, $\epsilon_{y}$ were independently generated from a normal distribution, with mean 0 and variances $\sigma^2_u,\sigma^2_y$ and $\sigma^2_y$, respectively, whose values were chosen to set the variances of $u$, $x$ and $y$ to one. 
These scaling parameters are chosen to enable an easy interpretation of the average treatment effect, $\beta$: with this scaling, $\beta = 0.1$ implies that one standard deviation of $t$ causes a $0.1$ standard deviation of $y$ and hence the causal effect of $t$ on $y$ explains $0.1^2 = 0.01=1\%$ of the variance of $y$. 
The only place our simulation differs from \citeauthor{hartwig2017robust}, is the treatment effect is a function of observed coefficients, $x\in \Re^{10}$, with each $x_i\sim \text{uniform}(-0.5,0.5)$, and the treatment effect is defined as, $\beta(x) := \text{round}(x^T\gamma^{(xt)}, 0.1)$, with $\gamma^{(xt)}$ a sparse vector of length 10, with three non-zeros $\gamma^{(xt)}_i\sim \text{uniform}(0.2,0.5)$. The resulting true $\beta(x)$, takes on values in $\{-0.3, -0.2, \dots, 0.2, 0.3\}$. We also use 100 genetic variants instead of the 30 used in \citeauthor{hartwig2017robust}, so that we could test ModeIV in a larger scale scenario. The resulting data generating process is given by,

\begin{align*}
    &z_i \sim \text{Binomial}(2, p_i) \quad \text{for $i$ in }[1\dots K],\quad \beta(x) := \text{round}(x^T\gamma^{(xt)}, 0.1).\\
    &t := \sum_{j=1}^K \alpha_j z_j + \rho u + \epsilon_x \quad y :=\beta(x) t + \sum_{j=1}^K \delta_j z_j + u + \epsilon_y  \nonumber
\end{align*}

\end{document}